\newtheorem{theorem}{Theorem}
\newtheorem{lemma}[theorem]{Lemma}
\newtheorem{proposition}[theorem]{Proposition}
\newtheorem*{remark}{Remark}
\theoremstyle{plain}
\DeclareMathOperator{\E}{\mathrm{E}}
\title{Testing Graph Properties with the Container Method \footnote{An earlier version of this article appeared in the proceedings of the 2023 IEEE 64th Annual Symposium on Foundations of Computer Science (FOCS).}}
\author{Eric Blais} 
\author{Cameron Seth}
\affil{University of Waterloo}
\begin{document}
\maketitle

\begin{abstract}
We establish nearly optimal sample complexity bounds for testing the $\rho$-clique property in the dense graph model. Specifically, we show that it is possible to distinguish graphs on $n$ vertices that have a $\rho n$-clique from graphs for which at least $\epsilon n^2$ edges must be added to form a $\rho n$-clique by sampling and inspecting a random subgraph on only $\tilde{O}(\rho^3/\epsilon^2)$ vertices. We also establish new sample complexity bounds for $\epsilon$-testing $k$-colorability. In this case, we show that a sampled subgraph on $\tilde{O}(k/\epsilon)$ vertices suffices to distinguish $k$-colorable graphs from those for which any $k$-coloring of the vertices causes at least $\epsilon n^2$ edges to be monochromatic. The new bounds for testing the $\rho$-clique and $k$-colorability properties are both obtained via new extensions of the graph container method. This method has been an effective tool for tackling various problems in graph theory and combinatorics. Our results demonstrate that it is also a powerful tool for the analysis of property testing algorithms. 
\end{abstract}

\section{Introduction}

Is it possible to test if a graph contains a large clique or if it is $k$-colorable while examining only a small fraction of that graph? These problems are two of the foundational examples in graph property testing that were originally considered in the groundbreaking work of Goldreich, Goldwasser, and Ron~\cite{goldreichPropertyTesting1998}. The problems can be specified formally in the dense graph property testing framework defined as follows.

A (simple undirected) graph $G$ on $n$ vertices is \emph{$\epsilon$-far} from a graph property $\Pi$ if we need to add or remove at least $\epsilon n^2$ edges from $G$ to obtain a graph that has the property $\Pi$. 
A \emph{canonical $\epsilon$-tester} for $\Pi$ with sample cost $s$ is a bounded-error\footnote{By \emph{bounded error} we mean there exist absolute constants $\delta_1>\delta_2$ such that if $G$ has property $\Pi,$ then the algorithm accepts with probability at least $\delta_1,$ and if $G$ is $\epsilon$-far from $\Pi,$ then the algorithm accepts with probability at most $\delta_2.$} randomized algorithm that samples a set $S$ of $s$ vertices of some unknown graph $G$, examines the induced subgraph $G[S]$, and based only on this local view of the graph can distinguish between the case where $G$ has property $\Pi$ and the case where it is $\epsilon$-far from $\Pi$.
The \emph{sample complexity} of $\Pi$, denoted $\mathcal{S}_{\Pi}(n,\epsilon)$, is the minimum value $s$ for which there is a canonical $\epsilon$-tester for $\Pi$ with sample cost $s$.

\subsection{Testing Cliques}

The \textsc{$\rho$-Clique} property is the set of all graphs on $n$ vertices that contain a clique on $\rho n$ vertices. 
The problem of testing the \textsc{$\rho$-Clique} property can be considered in both the \emph{large clique} regime, where $\rho$ is a constant, and in the \emph{small clique} regime, where $\rho := \rho(n)$ is a function of the number of vertices of the graph. In the small clique regime, it is important to note that the problem is non-trivial only when $\epsilon < \rho^2$ is also a function of $n$, since we can always add at most $\binom{\rho n}{2} < \rho^2 n^2$ edges to any graph to form a $\rho n$-clique.

What is the sample complexity $\mathcal{S}_{\rho\textsc{-Clique}}(n,\epsilon)$ for $\epsilon$-testing the \textsc{$\rho$-Clique} property? 
Goldreich, Goldwasser, and Ron~\cite{goldreichPropertyTesting1998} were the first to consider this question and showed that $\mathcal{S}_{\rho\textsc{-Clique}}(n,\epsilon) = \tilde{O}(\rho/\epsilon^4)$. 
This result has remarkable qualitative implications in both the large clique and the small clique regimes.
In the large clique regime, it shows that $\epsilon$-testing $\rho$-cliques requires only inspection of a \emph{constant}-sized subgraph when $\epsilon > 0$ is an absolute constant. 
And in the small clique regime, it shows that the sample complexity for $\epsilon$-testing $\rho$-clique property is \emph{sublinear} in $n$ for all $\rho = \omega(n^{-1/7})$ when $\epsilon = \Omega(\rho^2)$---i.e., when the tester must distinguish graphs with $\rho n$-cliques from those whose $\rho n$-subgraphs all have constant density bounded away from $1$. 

Improved bounds on the sample complexity for testing cliques were obtained by Feige, Langberg, and Schechtman~\cite{feigeCliqueTesting2004}.
With a new direct analysis of the canonical tester for testing cliques, they showed that $\mathcal{S}_{\rho\textsc{-Clique}}(n,\epsilon) = \tilde{O}(\rho^4/\epsilon^3)$.
And by considering the restricted problem of distinguishing graphs that consist of a single $\rho n$-clique from those that consist of a single $(\rho - \frac{\epsilon}{\rho})n$-clique, they showed that
$\mathcal{S}_{\rho\textsc{-Clique}}(n,\epsilon) = \tilde{\Omega}(\rho^3/\epsilon^2)$.

Our first main result is a new upper bound on the sample complexity of testing the \textsc{$\rho$-Clique} property that is nearly optimal, since it matches the lower bound of Feige, Langberg, and Schechtman up to polylogarithmic factors. 

\begin{theorem}
\label{thm:clique}
The sample complexity of the \textsc{$\rho$-Clique} property is $\mathcal{S}_{\rho\textsc{-Clique}}(n,\epsilon) = \tilde{O}(\frac{\rho^3}{\epsilon^2})$.\footnote{Here and throughout the article, we use $\tilde{O}(\cdot)$ and $\tilde{\Omega}(\cdot)$ notation to hide terms that are polylogarithmic in the argument. See \Cref{sect:testing-IS,sect:colorable} for the precise formulation of the main theorems.}
\end{theorem}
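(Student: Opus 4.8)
The plan is to analyze the natural canonical tester: sample a set $S$ of $s = \tilde{\Theta}(\rho^3/\epsilon^2)$ vertices and accept if and only if $G[S]$ contains a clique on at least $\tau\,|S|$ vertices, for a threshold $\tau$ slightly below $\rho$, say $\tau = \rho - \tfrac{\epsilon}{6\rho}$. Completeness is the easy direction: if $G$ has a $\rho n$-clique $K$, then $|K \cap S|$ is hypergeometrically distributed with mean $\rho s$, and since the deviation we need, $(\rho - \tau)s = \tfrac{\epsilon}{6\rho}s$, is large compared with the standard deviation $\Theta(\sqrt{\rho s})$ exactly once $s \gg \rho^3/\epsilon^2$, the tester accepts except with probability $\exp(-\Omega(\epsilon^2 s/\rho^3))$. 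This already explains the shape of the bound: the exponent $\rho^3/\epsilon^2$ is forced by the variance $\rho s$ of a hypergeometric sample weighed against a gap of order $\tfrac{\epsilon}{\rho}\cdot s$, which is precisely the source of the matching Feige--Langberg--Schechtman lower bound.

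Soundness is where the container method does the work. I would pass to the complement $\bar G$: the graph $G$ is $\epsilon$-far from \textsc{$\rho$-Clique} exactly when every set of $\rho n$ vertices spans more than $\epsilon n^2$ edges of $\bar G$. From this I extract two structural facts. First, the independence number of $\bar G$ is less than $(\rho - \tfrac{\epsilon}{\rho})n$: padding an independent set of size $(\rho-\gamma)n$ with $\gamma n < \tfrac{\epsilon}{\rho}n$ further vertices yields a $\rho n$-set spanning fewer than $\rho\gamma n^2 \le \epsilon n^2$ edges of $\bar G$, a contradiction. Second, $\bar G$ is ``everywhere dense'': using $\epsilon < \rho^2$, every subset of size at least $(\rho - \tfrac{\epsilon}{3\rho})n$ spans at least $\tfrac{\epsilon}{2}n^2$ edges, so its edge density is $\Omega(\epsilon)$ (indeed $\Omega(\epsilon/\rho^2)$ at the bottom of this range). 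Running the Kleitman--Winston / graph container process on $\bar G$ then yields a family $\cC$ of containers, each of size less than $(\rho - \tfrac{\epsilon}{3\rho})n$ and each produced after only $R = O(1/\epsilon)$ ``selection'' steps, such that every clique of $G$ lies in some $C \in \cC$. If $G[S]$ has a clique of size $\tau s$ then some container satisfies $|C \cap S| \ge \tau s$; but $\mathbb{E}|C \cap S| = \tfrac{|C|}{n}s < (\rho - \tfrac{\epsilon}{3\rho})s$, a gap of $\Theta(\tfrac{\epsilon}{\rho})s$ below $\tau s$, so for each fixed container this has probability at most $\exp(-\Omega(\epsilon^2 s/\rho^3))$ --- the same order as the completeness error. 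A union bound over $\cC$ then completes the proof, provided the number of containers the union bound must range over contributes only a polylogarithmic factor to $s$.

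Making that last step tight is the main obstacle, and is where I expect a genuinely new extension of the container method to be needed. The cheap version of the argument controls containers only through their ``fingerprint'' --- the $R$ vertices chosen during the process --- and, since in the bad event this fingerprint lies inside $S$, there are at most $\binom{s}{\le R}$ relevant containers; but $R$ is of order $1/\epsilon$, so this costs a factor $R$ in the exponent of the union bound and yields only $\mathcal{S}_{\rho\textsc{-Clique}}(n,\epsilon) = \tilde{O}(\rho^3/\epsilon^3)$. To remove the extra $1/\epsilon$ and reach the optimal $\tilde{O}(\rho^3/\epsilon^2)$, I would fold the choice of fingerprint into the concentration estimate itself: rather than a union bound over fingerprints followed by a separate hypergeometric tail for each, analyze the quantity $|C\cap S|$ produced by the entire peeling process as a single exponential-tail deviation, so that the branching over the $R$ selection steps is paid for inside --- and dominated by --- the $\exp(-\Omega(\epsilon^2 s/\rho^3))$ bound rather than multiplied against it. With this refined container estimate in hand, the remaining work --- the boundary regime $\epsilon \to \rho^2$, the small-clique regime where $\rho$ and $\epsilon$ are functions of $n$, and pinning down the hidden polylogarithmic factors --- should be routine.
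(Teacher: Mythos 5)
Your proposal gets the high-level structure right --- analyze the canonical tester, pass to the complement, use containers, and observe that a naive union bound over all fingerprints overshoots the target --- and you correctly diagnose that the naive bound would only give $\tilde{O}(\rho^3/\epsilon^3)$. But the specific mechanism the paper uses to close this gap is absent, and your proposed substitute is a different (and unworked-out) idea.

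The paper's key insight is a refined container lemma (\Cref{lem:GCL-indepset}) in which the container size \emph{decreases linearly with the length of its fingerprint}: for every independent set there is a step $t$ at which the container has size at most $\bigl(\rho - t\cdot\frac{\epsilon}{8\rho\ln(2\rho/\epsilon)}\bigr)n$. This is not a property of the classical Kleitman--Winston output, which produces a container of fixed size $(\rho-\Theta(\epsilon/\rho))n$ regardless of how many steps were needed to reach it; the paper has to prove it via a separate shrinking argument (\Cref{lem:containers-shrinking}). Once you have the linear trade-off, the union bound works directly: for fingerprint length $t$ the deviation from the mean of the hypergeometric is $\Theta(t \cdot \frac{\epsilon}{\rho\ln(\cdot)}) \cdot s$, so the Chernoff exponent scales like $t^2\epsilon^2 s/(\rho^3\ln^2(\cdot))$, which is quadratic in $t$, while the union-bound cost $\binom{s}{t}\le \exp(t\ln s)$ is only linear in $t$. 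The quadratic-versus-linear comparison is what lets $s=\tilde O(\rho^3/\epsilon^2)$ suffice. Your proposal instead keeps a single container of fixed size $(\rho-\frac{\epsilon}{3\rho})n$, for which the Chernoff exponent is a constant independent of $t$; that can never beat $\binom{s}{t}$ beyond $t = O(1)$, so your suggestion to ``fold the branching into the concentration estimate'' via a single martingale-type tail has no handle on the $t$-dependence. It is not just underspecified; as stated it is structurally unable to produce the $t^2$ gain that the proof needs, because the gain does not come from smarter bookkeeping of the branching but from the strictly stronger structural fact that late fingerprints have strictly smaller containers.

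Two smaller points. Your estimate that the process terminates in $R=O(1/\epsilon)$ steps is the bound for the $k$-colorability variant where the density of the whole graph is $\Omega(\epsilon)$; here the density guarantee only holds within sets of size $\approx\rho n$, and the correct bound is $t^*\le\tilde O(\rho^2/\epsilon)$ (see the end of the proof of \Cref{lem:GCL-indepset}). And the paper handles completeness more simply than you do: with threshold exactly $\rho s$, the median of the hypergeometric already gives acceptance probability $\ge\frac12$, which is then paired with soundness error $\le\frac13$ and standard error reduction; your threshold shift to $\tau=\rho-\frac{\epsilon}{6\rho}$ also works but is not needed.
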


The qualitative implications of \Cref{thm:clique} are most striking in the small clique regime. In this regime, the result has a more natural representation when we reformulate it using notation from the Densest $k$-Subgraph (D$k$S) problem. (For more details on the D$k$S problem, see~\cite{Manurangsi2017} and the references therein.)

\newtheorem*{clique-thm-alt}{Theorem 1'}
\hypertarget{thm:clique-alt}{\begin{clique-thm-alt}[Alternative formulation]
For every $n$, $k := k(n) < n$, and $\delta := \delta(n) > 0$, there is a bounded-error randomized algorithm that distinguishes (i) graphs that contain a $k$-clique from (ii) graphs whose $k$-subgraphs all contain at most $(1-\delta) \binom{k}{2}$ edges by inspecting the induced subgraph of the input graph $G$ on a random set $S$ of only $|S| = O(\frac{n}{\delta^2 k} \ln^3 (\frac{n}{\delta^2 k}))$ vertices.
\end{clique-thm-alt}}

This result implies that for every constant $\delta > 0$ and every $k = \omega(\ln^3 n)$, we can distinguish graphs with a $k$-clique from graphs whose $k$-subgraphs have density at most $1-\delta$ by examining only a sublinear portion of the graph. The previous bounds in~\cite{goldreichPropertyTesting1998} and~\cite{feigeCliqueTesting2004} showed that sublinear sample complexity is achievable for constant $\delta$  when $k = \tilde{\omega}(n^{6/7})$ and $k = \tilde{\omega}(n^{1/2})$, respectively.

This result can also be viewed as a generalization of results of R{\' a}cz and Schiffer~\cite{RaczSchiffer2019} and of Huleihel, Mazumdar, and Pal~\cite{HuleihelMP2021} regarding the sample complexity of the Planted $k$-Clique Detection problem. 
The latter result shows that sampling an induced subgraph on $O(\frac{n}{k\delta} \ln {\frac nk})$ vertices suffices to distinguish (i) random graphs with edge probability $(1-\delta)$ from (ii) random graphs with edge probability $(1-\delta)$ containing a planted $k$-clique.
The former result shows a similar bound for the restricted case of $\delta = 1/2.$
When $\delta>0$ is a constant, the bound in \hyperlink{thm:clique-alt}{Theorem 1'} shows that the same sample complexity (up to logarithmic factors) suffices to solve the more general perfect completeness decision version of the Densest $k$-Subgraph problem as defined in the theorem statement.

Moreover, \hyperlink{thm:clique-alt}{Theorem 1'} also gives a nearly optimal upper bound on the \emph{query} complexity of algorithms that solve the Densest $k$-Subgraph problem in the regime where $\delta > 0$ is constant.
In this setting, the algorithm is free to query any vertex pair (instead of selecting a set $S$ of vertices and querying all the pairs of vertices in $S$ to determine $G[S]$ exactly), and is free to \emph{adaptively} select which pairs of vertices to query based on the results of its earlier queries. 
The bound in the theorem implies that there is a bounded-error algorithm that distinguishes graphs with $k$-cliques from those whose subgraphs have at most $(1-\delta)k^2$ edges and has query complexity at most $O(\frac{n^2}{\delta^4 k^2}\ln^6(\frac{n}{\delta^2 k}))$. 
However, as shown in~\cite{RaczSchiffer2019,HuleihelMP2021}, at least $\Omega(\frac{n^2}{\delta^2 k^2} \ln^2 {\frac nk})$ queries are required to solve the easier $k$-Planted Clique problem, even for adaptive algorithms.

\subsection{Testing Colorability}

The \textsc{$k$-Colorable} property is the set of all graphs on $n$ vertices that are $k$-colorable. The case where $k=2$ corresponds to bipartiteness.
For testing bipartiteness, nearly optimal bounds $\mathcal{S}_{2\textsc{-Colorable}}(n,\epsilon) = \tilde{\Theta}(1/\epsilon)$~\cite{goldreichPropertyTesting1998,AlonKrivelevich02} are known. We focus on the case where $k \ge 3$. We again consider two different regimes: the \emph{low chromatic number} regime where $k$ is a constant, and the \emph{polychromatic} regime where $k = k(n)$ is a function of $n$.
In the polychromatic regime, we note that $\epsilon$-testing the \textsc{$k$-Colorable} property is non-trivial only when $\epsilon < 1/k$ (and so also depends on $n$) since we can always eliminate at most $k \binom{n/k}{2} \le n^2/k$ edges from a graph to make it $k$-colorable.

The study of the sample complexity of the \textsc{$k$-Colorable} property has a long history that predates the definition of property testing itself.  R\"odl and Duke~\cite{RodlDuke85}, building on prior work of Bollob\'as, Erd\H os, Simonovits, and Szemer\'edi~\cite{BollobasESS78}, showed that $\mathcal{S}_{k\textsc{-Colorable}}(n,\epsilon)$ is a constant independent of $n$ when both $k$ and $\epsilon$ are constant. However, this result relies on the regularity lemma and so the bounds obtained on the sample complexity grow as a tower of height polynomial in $1/\epsilon$. Notably, these results say very little in the polychromatic regime since they give a sublinear bound on the sample complexity of testing $k(n)$-colorability only for some $k(n)$ that are iterated logarithm functions of $n$.

Goldreich, Goldwasser, and Ron~\cite{goldreichPropertyTesting1998} obtained much better bounds for the \textsc{$k$-Colorable} property. 
They showed that $\mathcal{S}_{k\textsc{-Colorable}}(n,\epsilon) = \tilde{O}(k^2/\epsilon^{3})$ for all $k \ge 3$. This result provides a significant improvement in the testability of colorability in the polychromatic regime: it shows that $k$-colorability is testable with a sublinear sample complexity when $\epsilon = \delta/k$ for a constant $\delta > 0$ whenever $k = o(n^{1/5})$.

Alon and Krivelevich~\cite{AlonKrivelevich02} further improved the bounds for testing $k$-colorability. They showed that $\mathcal{S}_{k\textsc{-Colorable}}(n,\epsilon) = \tilde{O}(k/\epsilon^{2})$ for all $k \ge 3$. This result shows that $k$-colorability is testable with a sublinear sample complexity for all $k = o(n^{1/3})$.
Sohler~\cite{Sohler12} obtained another improvement in the low chromatic number regime, showing that  $\mathcal{S}_{k\textsc{-Colorable}}(n,\epsilon) = \tilde{O}(k^6/\epsilon)$ and, in particular, when $k$ is a constant the sample complexity of testing $k$-colorability is $\tilde{\Theta}(1/\epsilon)$. But in the polychromatic setting, the bound of Alon and Krivelevich remains stronger whenever $k = \omega(1)$ and $\epsilon = \omega(1/k^5)$.

Our next main result unifies and improves on both of the incomparable results of Alon and Krivelevich~\cite{AlonKrivelevich02} and Sohler~\cite{Sohler12}.

\begin{theorem}
\label{thm:k-col}
The \textsc{$k$-Colorable} property has sample complexity $\mathcal{S}_{k\textsc{-Colorable}}(n,\epsilon) = \tilde{O}(\frac{k}{\epsilon})$.
\end{theorem}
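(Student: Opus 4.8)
The plan is to analyze the canonical $\epsilon$-tester: sample a set $S$ of $s=\tilde{O}(k/\epsilon)$ vertices of $G$ uniformly at random and accept if and only if $G[S]$ is $k$-colorable. Completeness is immediate, since every induced subgraph of a $k$-colorable graph is $k$-colorable. For soundness, suppose $G$ is $\epsilon$-far from $k$-colorable; I must show $G[S]$ fails to be $k$-colorable with probability bounded away from $0$. The starting point is that $G[S]$ is $k$-colorable exactly when there are $k$ pairwise disjoint independent sets $I_1,\dots,I_k$ of $G$ with $S\subseteq I_1\cup\dots\cup I_k$: the color classes of any proper coloring of $G[S]$ are independent in $G$ because $G[S]$ is an induced subgraph. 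So it suffices to bound, by a union bound, the probability that $S$ is covered by $k$ independent sets of $G$.

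The core of the argument is a container lemma for such colorable configurations. I want a family $\cC$, depending only on $G$, of $k$-tuples $(\mathcal{W}_1,\dots,\mathcal{W}_k)$ of vertex subsets such that: (i) for any $k$ disjoint independent sets $I_1,\dots,I_k$ of $G$ there is a tuple in $\cC$ with $I_i\subseteq\mathcal{W}_i$ for all $i$ (so every colorable configuration satisfies $S\subseteq\bigcup_i\mathcal{W}_i$ for some tuple); (ii) each tuple is \emph{nearly proper}, i.e.\ $\sum_{i=1}^{k}e(G[\mathcal{W}_i])<\tfrac{\epsilon}{2}n^2$; and (iii) $\ln|\cC|=\tilde{O}(k)$. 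Given such a family the argument closes quickly: for any nearly proper tuple let $U=\bigcup_i\mathcal{W}_i$, assign each $v\in U$ a color $i$ with $v\in\mathcal{W}_i$ and color $V\setminus U$ arbitrarily; this produces at most $\sum_i e(G[\mathcal{W}_i])+|V\setminus U|\cdot n$ monochromatic edges, so $\epsilon$-farness forces $|V\setminus U|\ge\tfrac{\epsilon}{2}n$. Hence $\Pr[S\subseteq U]\le(1-\tfrac{\epsilon}{2})^{s}\le e^{-\epsilon s/2}$ for every tuple in $\cC$, and a union bound gives $\Pr[G[S]\text{ is }k\text{-colorable}]\le|\cC|\cdot e^{-\epsilon s/2}$, which is $o(1)$ once $s=\tfrac{2}{\epsilon}\big(\ln|\cC|+\omega(1)\big)=\tilde{O}(k/\epsilon)$. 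It is property (ii) that is responsible for the $1/\epsilon$ rather than $1/\epsilon^2$ dependence.

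The main obstacle is constructing $\cC$ so that $\ln|\cC|=\tilde{O}(k)$; this is where the new extension of the graph container method is needed. Properties (i) and (ii) for a single color are precisely what the usual graph container lemma supplies for one independent set, but applying that lemma separately for each color forces each $\mathcal{W}_i$ to have edge density on the order of $\epsilon/k$, which costs roughly $k/\epsilon$ pivot vertices in the fingerprint and yields only $\ln|\cC|=\tilde{O}(k^2/\epsilon)$, i.e.\ sample complexity $\tilde{O}(k^2/\epsilon^2)$. I would try to improve this in two ways. First, condition (ii) constrains only the \emph{total} density $\sum_i e(G[\mathcal{W}_i])$, so the pivoting process can be run against the single potential $\sum_i|\mathcal{W}_i|\le kn$ across all colors simultaneously; a sharper accounting than ``remove one high-degree vertex at a time'' (a weighted potential, or batch removal) seems necessary to push the number of pivots down to $\tilde{O}(k)$. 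Second, and more essentially, the fingerprint should not be recorded as an explicit list of $\Omega(k/\epsilon)$ vertices of $G$, each costing $\log n$ bits. Instead I would set aside a subsample $S_0\subseteq S$ of only $\tilde{O}(k)$ vertices to play the role of the fingerprint, so that the relevant containers are indexed by the $k^{|S_0|}=2^{\tilde{O}(k)}$ partitions of $S_0$ into $k$ classes, and then show that for a typical $S_0$ the tuple induced by any partition that arises from an actual $k$-coloring of $G[S]$ is nearly proper. The delicate points --- certifying near-properness uniformly over the exponentially many partitions of $S_0$, and dealing with colors that occur in $S\setminus S_0$ but not in $S_0$ (for which the corresponding $\mathcal{W}_i$ would naively equal all of $V$) --- are exactly what the refined container argument must handle, and I expect them to account for the bulk of the technical work.
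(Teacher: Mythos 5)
Your framework---analyze the canonical tester, cover all potential $k$-colorings of $G[S]$ by a collection of container tuples, and finish with a union bound---is the same one the paper uses, and your reduction of ``$G[S]$ is $k$-colorable'' to ``$S$ is covered by $k$ independent sets of $G$'' is correct. But the technical target you set yourself, a family $\cC$ of $k$-tuples with $\ln|\cC| = \tilde{O}(k)$ in which \emph{every} tuple is nearly proper, is the wrong goal, and your two proposed fixes do not reach it. To drive $\sum_i e(G[\mathcal W_i])$ below $\epsilon n^2/2$ by greedy pivoting you need roughly $1/\epsilon$ pivots per color (\Cref{prop:container-degree} only guarantees max degree $n/t$ in the $t$-th container, so sparsity forces $t=\Omega(1/\epsilon)$), hence $\Omega(k/\epsilon)$ fingerprint vertices total. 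Your observation that fingerprints should be encoded relative to the sample rather than to $V$ is right---and the paper exploits it directly by always drawing fingerprint vertices from $S$, since the independent sets that matter are already subsets of $S$---but even at $\ln s$ bits per vertex this gives $\ln|\cC| = \tilde\Omega(k/\epsilon)$ and hence $s = \tilde{O}(k/\epsilon^2)$, which is Alon--Krivelevich, not an improvement. The subsample-$S_0$ device does not escape this: partitioning an $\tilde{O}(k)$-vertex $S_0$ into $k$ classes leaves a typical color with only $\tilde{O}(1)$ pivots, whose container is far too dense for your condition~(ii).

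The paper's resolution is to abandon near-properness as an absolute requirement and instead prove a \emph{linear trade-off}. Graph Container Lemma~II (\Cref{lem:GCL-kcol}) says that for any $k$ independent sets in an $\epsilon$-far graph there is some $t\le 4/\epsilon$ with
\[
\Bigl|\bigcup_{i=1}^k C_t(I_i)\Bigr| \le \Bigl(1 - t\cdot\frac{\epsilon}{4\ln(1/\epsilon)}\Bigr)n,
\]
so the union shrinks linearly in the fingerprint parameter $t$ even when $t\ll 1/\epsilon$ and the containers are still dense. The union bound is then run scale-by-scale in $t$: at scale $t$ there are at most $s^{tk}$ choices of $k$ fingerprints of size at most $t$ from $S$ (cost $tk\ln s$ in the exponent), while the probability that the remaining $s - tk$ sample vertices all land in a union of size $(1 - \frac{t\epsilon}{4\ln(1/\epsilon)})n$ is $\exp(-\Theta(t\epsilon s/\ln(1/\epsilon)))$. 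Because cost and gain are both linear in $t$, they balance for every $t$ simultaneously once $s = \Theta(\frac{k}{\epsilon}\ln s\,\ln\frac1\epsilon) = \tilde{O}(k/\epsilon)$. This trade-off---containers with short fingerprints are dense but cheap to enumerate, containers with long fingerprints are expensive to enumerate but cover far less of $V$---is the missing idea, and it is what produces the $1/\epsilon$ rather than $1/\epsilon^2$ dependence.
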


\Cref{thm:k-col} implies that sublinear sample complexity suffices to distinguish $k$-colorable graphs from graphs whose $k$-colorings all result in at least $\frac{\delta}{k}n^2$ monochromatic edges for all values of $k = o(\sqrt{n})$ when $\delta > 0$ is constant.

We do not know whether the bound in \autoref{thm:k-col} is optimal. The best lower bound for testing $k$-colorability in the polychromatic regime is $\tilde{\Omega}(1/\epsilon)$~\cite{AlonKrivelevich02}. Since $\epsilon$-testing $k$-colorability is non-trivial only when $\epsilon < 1/k$, this means that there is a quadratic gap between our new upper bound and this lower bound for testing $k$-colorability for large values of $k$.

\subsection{Proof Overview and the Graph Container Method}
\label{sect:overview}

\Cref{thm:clique,thm:k-col} are both established by analyzing the natural tester that checks whether the sampled induced subgraph $G[S]$ has the property that we are testing or not. In the case of large cliques, this tester accepts with probability at least 1/2,\footnote{See the Remark at the end of \Cref{sect:proof} for a discussion related to this choice of acceptance probability.} and in the case of $k$-colorability, the tester always accepts when the graph has the property.
The challenge in the analysis, as is usually the case in property testing, is in showing that the tester rejects all graphs that are far from having the property with sufficiently large probability. We address this challenge with the use of the graph container method.

We first note that the graph container method is stated in terms of independent sets. In the case of testing $\rho n$-cliques, our application of the graph container method applies more naturally to testing $\rho n$-independent sets, which is equivalent by considering the complement of a graph.

Very briefly, the graph container method builds upon the observation that though a graph may contain a large number of independent sets, for every graph $G$ satisfying certain conditions (such as being $\epsilon$-far from having a $\rho n$ independent set) we can find a much smaller collection of sets of vertices that we call \emph{containers} such that (i) every independent set is a subset of at least one of the containers, (ii) the containers are small, and (iii) for each container $C$, the induced subgraph $G[C]$ is sparse.
The graph container method was originally introduced by Kleitman and Winston~\cite{kleitmanWinston1982} to bound the number of square-free graphs. The method has since been used to great effect, notably by Sapozhenko (see~\cite{Sapozhenko05} and the references therein), and was later extended to the hypergraph container method~\cite{baloghIndependentSetsHypergraphs2015,saxtonThomasonHypergraphContainers2015} that has seen even more wide use throughout combinatorics. The container method has also been recently applied to algorithmic settings for the problem of approximately counting independent sets in bipartite graphs~\cite{JenssenPP23} and the problem of obtaining faster exact algorithms for almost-regular graphs~\cite{Zamir23}.

Our use of the graph container method builds on Kleitman and Winston's original approach to it. The key component of the method as they introduced it is a simple greedy algorithm for identifying a small set of vertices of an independent set that form its \emph{fingerprint}. The fingerprint of an independent set $I$ in turn uniquely defines the \emph{container} that contains $I$.
Since the fingerprints are small sets, the collection of all possible fingerprints is relatively small.
Furthermore, the set of containers associated with all possible fingerprints cover all independent sets in the graph.
We describe the greedy algorithm in \Cref{sect:containers}.

Our main technical lemmas provide tight bounds on the size of the containers as a function of the size of the corresponding fingerprints in graphs that are $\epsilon$-far from the properties of interest.
For the problem of testing $\rho n$-independent sets, we prove the following graph container lemma.

\begin{restatable}[Graph Container Lemma I]{lemma}{GCLindepsetNew}
\label{lem:GCL-indepsetNew} 
Let $G = (V,E)$ be a graph on $n$ vertices that is $\epsilon$-far from the \textsc{$\rho$-IndepSet} property.
Then there exists a collection $\mathcal{F} \subseteq 2^V$ of fingerprints and a container function $\mathcal{C} : \mathcal{F} \rightarrow 2^V$ that satisfy the following two properties.
\begin{enumerate}
\item For every independent set $I$ in $G$ of cardinality $|I| \geq \frac{8\rho^2 \ln(\frac{2\rho}{\epsilon})}{\epsilon}$, there is a fingerprint $F \in \mathcal{F}$ such that $F \subseteq I \subseteq \mathcal{C}(F)$.
\item For every fingerprint $F \in \mathcal{F}$, the cardinality of $F$ and of its associated container $\mathcal{C}(F)$ are bounded by 
\[
	|F| \leq \frac{8\rho^2 \ln(\frac{2\rho}{\epsilon})}{\epsilon} \qquad \mbox{and} \qquad |\mathcal{C}(F)| \leq \left( \rho - |F| \cdot \frac{\epsilon}{8 \rho \ln(\frac{2\rho}{\epsilon})} \right) n.
\]
\end{enumerate}
\end{restatable}

The container lemma is used to analyze the soundness of testers for $\rho n$-independent sets in the following way. 
Fix any graph $G$ that is $\epsilon$-far from having a $\rho n$-independent set and 
let $S$ be the set of $s$ vertices sampled by the tester.
The tester accepts if and only if there is a subset $I \subseteq S$ of size $\rho s$ that forms an independent set in $G$.
By the first property of \Cref{lem:GCL-indepsetNew}, such a set $I$ can be a subset of $S$ if and only if the fingerprint $F$ corresponding to $I$ is contained in $S$ and an additional $\rho s - |F|$ vertices from $\mathcal{C}(I)$ are also included in $S$.
Using the upper bounds on the cardinality of the containers from the second property of the lemma, we can show that the probability that this many vertices from $\mathcal{C}(I)$ are sampled in $S$ is extremely small.
So small, in fact, that we can apply a union bound over all potential fingerprints in $S$ to bound the overall probability that the sample includes a large independent set.
For all the details of the proof see \Cref{sect:testing-IS}.

For testing $k$-colorability the overall approach is the same but a different argument is required because, for graphs that are $\epsilon$-far from $k$-colorable, we wish to have a collection of fingerprints and associated containers to cover $k$-colorable subgraphs instead of independent sets.
Since the vertices of any $k$-colorable subgraph can be partitioned into $k$ independent sets, we find fingerprints for each of the $k$ independent sets and use this sequence of fingerprints to define a container.
The Graph Container Lemma that we need in this case is the following.

\begin{restatable}[Graph Container Lemma II]{lemma}{GCLkcolNew}
    \label{lem:GCL-kcolNew}
    Let $G$ be $\epsilon$-far from $k$-colorable.
    Then there exists a collection $\mathcal{F} \subseteq (2^V)^k$ of sequences of fingerprints and a container function $\mathcal{C} : \mathcal{F} \rightarrow 2^V$ that satisfy the following two properties.
    \begin{enumerate}
        \item For every set $U \subseteq V$ such that $G[U]$ is $k$-colorable and $|U| \geq 4k/\epsilon,$ there is a sequence of fingerprints $F=(T_1,\dots,T_k) \in \mathcal{F}$ such that $\bigcup_{i \in [k]} T_i \subseteq U \subseteq \mathcal{C}(F).$
        \item For every sequence of fingerprints $F=(T_1,\dots,T_k) \in \mathcal{F},$ $|T_i| \leq \frac{4}{\epsilon}$ for every $i \in [k],$ and \[|\mathcal{C}(F)| \leq \left( 1 - \frac{\epsilon}{4\ln (\frac 1\epsilon)}\cdot \max_{i \in [k]} |T_i| \right)n.\]
    \end{enumerate}
\end{restatable}

See \Cref{sect:colorable} for the proof of this lemma and all the details for its use in the analysis of $k$-colorability testing.

\subsection{Discussion}

\Cref{thm:clique,thm:k-col} suggest a number of intriguing open problems. We discuss three of them briefly.

\paragraph{Property testing with the container method.}
Our results demonstrate that the graph container method provides a useful tool to study the problems of testing cliques and colorability.
In subsequent work \cite{blais2024new}, we show that the graph container method can be used to study all 0--1 graph partition properties, as defined by Nakar and Ron~\cite{NakarRon2018}.
It would be interesting to see if the method is useful for testing other classes of graph properties as well.
For example, the class of 0--1 graph partition properties does not include the property of containing one, or multiple, large cliques, and so we suspect there is a larger class of graph partition properties that the container method can be used on.

It would also be interesting to see what applications the hypergraph container method~\cite{baloghIndependentSetsHypergraphs2015,saxtonThomasonHypergraphContainers2015} might have for property testing.
In the subsequent work mentioned above, we showed this method can be used to provide tight analysis of testers for satisfiability and various partition properties of hypergraphs.
However, one of the most exciting aspects of the hypergraph container method in combinatorics is that it has led to a number of results for problems in combinatorics that at first glance do not appear to relate directly to hypergraphs.
Can the hypergraph container method similarly yield a new analysis technique for obtaining new property testing results in other domains as well?

\paragraph{Query complexity.}
What exactly is the relation between the query complexity of (adaptive or non-adaptive) testers that are free to query any potential edge in a graph and the query complexity of the canonical tester that samples a set of vertices and queries all potential edges within the corresponding induced subgraph? This question has been studied extensively in the dense graph setting (see, e.g.,~\cite{AlonFKS00,GoldreichTrevisan03,BogdanovTrevisan04,BogdanovLi2010,GonenRon10,GoldreichRon11,GoldreichWigderson21} and discussions in \cite{goldreich17,bhattacharyyaYoshida22}) but remains open for many natural properties of graphs. 

As discussed above, \Cref{thm:clique} gives a partial answer to this question for the case of testing large cliques, since it shows that the number of edge queries performed by the canonical $\epsilon$-tester for the \textsc{$\rho$-Clique} property is (up to logarithmic factors) identical to the query complexity of the best adaptive testing algorithm for the same problem when $\epsilon = \Omega(\rho^2)$. 
But what is the optimal query complexity when $\epsilon$ is smaller?
In subsequent work \cite{blais2024new}, we show that it is possible to $\epsilon$-test the \textsc{$\rho$-Clique} property with $\tilde{O}(\rho^5/\epsilon^{7/2})$ edge queries, which is strictly better than the number of edge queries made by the canonical tester in \Cref{thm:clique} when $\epsilon=o(\rho^2),$ but the optimal query complexity still remains open.

For $k$-colorability, the analogous question has already been posed even in the setting where $k=2$: Bogdanov and Li~\cite{BogdanovLi2010} conjectured that it is possible to $\frac1n$-test $2$-colorability (that is, bipartiteness) with $o(n^2)$ queries.  \Cref{thm:k-col} does not shed light on this conjecture (or its analogues for larger values of $k$) but it is possible that \Cref{lem:GCL-kcol} or a similar variant could be helpful in this conjecture and other related ones.

\paragraph{Time complexity.}
~\hyperlink{thm:clique-alt}{Theorem 1'} only bounds the sample complexity of the Densest $k$-Subgraph problem, but the proof of this bound also immediately implies fairly strong results on the time complexity of the problem. Namely, it shows that distinguishing graphs with $k$-cliques from those whose $k$-subgraphs all have density at most $(1-\delta)$ for some constant $\delta > 0$ can be done by sampling $s = O( \frac{n}{\delta^2k} \ln^3( \frac{n}{\delta^2 k}))$ vertices and checking whether the induced subgraph on those vertices contains a clique of size $(k/n)s = O(\delta^{-2} \ln^3( \frac{n}{\delta^2k}))$. Even by using exhaustive search for the latter task, the resulting time complexity is quasipolynomial in $n$---or, more precisely, it is $n^{O(\ln^3 n)}$. Stronger time complexity bounds already exist for this problem~\cite{FeigeSeltser97}, but it would be interesting to see if the graph container method could be used to directly improve the time complexity bounds for various property testing problems and related approximation problems.

\section{The Container Method}
\label{sect:containers}
In this section we describe the greedy algorithm used to construct fingerprints and containers, and give some basic properties of the algorithm.

\subsection{Fingerprint and Container Procedure}
The notions of \emph{fingerprints} and \emph{containers} of independent sets in graphs are defined according to the procedure presented in \Cref{alg:FingerprintContainer}, which is a slight variant of the greedy algorithm that was (implicitly) originally introduced by Kleitman and Winston~\cite{kleitmanWinston1982}. The algorithm works as follows. The first fingerprint of an independent set $I$ is simply the vertex $v$ contained in $I$ with the highest degree in $G,$ where ties are broken based on an assumed ordering of the vertices $V.$ Since $I$ is an independent set, any neighbour of $v$ must not be contained in $I$ and, furthermore, by the fact that $v$ has the highest degree in $G$ of all vertices in $I,$  any vertex with higher degree in $G$ cannot be in $I.$ The associated container is constructed by removing all vertices from these two collections. The procedure is repeated until every vertex from $I$ has been selected to the fingerprint.

\begin{algorithm}[ht]
\caption{\sc{Fingerprint \& Container Generator}}
\label{alg:FingerprintContainer}
\smallskip
\KwIn{A graph $G = (V,E)$ and an independent set $I \subseteq V$ }
\medskip
Initialize $F_0 \gets \emptyset$ and $C_0 \gets V$\;

\For{$t=1,2,\dots,|I|$} {
    $v_{t} \gets$ the vertex in $I \setminus F_{t-1}$ with largest degree in $G[C_{t-1}]$\;

    \smallskip 
    \tcp{Add $v_t$ to the fingerprint}
    $F_t \gets F_{t-1} \cup \{v_t\}$\;
    
    \smallskip
    \tcp{Remove all the neighbours of $v_t$ from the container}
    $C_t \gets C_{t-1} \setminus \big\{ w \in C_{t-1} : (v_t,w) \in E\big\}$\;

    \smallskip
    \tcp{And remove all vertices with higher degree than $v_t$ in $G[C_{t-1}]$}
    $C_t \gets C_t \setminus \big\{ w \in C_{t-1} : \deg_{G[C_{t-1}]}(w) > \deg_{G[C_{t-1}]}(v_t) \big\}$\;
}
Return $F_1,\ldots,F_{|I|}$ and $C_1,\ldots,C_{|I|}$\;
\end{algorithm}

The greedy algorithm for generating fingerprints and containers was originally defined with an appropriate stopping condition so that each independent set generates a single fingerprint and container.
For our proofs, however, it will be more convenient to work directly with the sequence of fingerprints $F_1,F_2,\ldots,F_{|I|}$ and their corresponding containers $C_1,C_2,\ldots,C_{|I|}$ generated by the algorithm for the independent set $I$. Furthermore, it is also convenient to extend the definition with $C_t=F_t=I$ for all $t > |I|.$ We refer to $F_t$ and $C_t$ as the \emph{$t$-th fingerprint} and \emph{$t$-th container} of $I$, respectively.

When we consider the fingerprints or containers of multiple independent sets, we write $F_t(I)$ and $C_t(I)$ to denote the $t$-th fingerprint and container of the independent set $I$. (When the current context specifies a single independent set, we will continue to write only $F_t$ and $C_t$ to keep the notation lighter.)

\subsection{Basic Properties of Containers} 

By construction, the sequences of fingerprints and containers of an independent set satisfy a number of useful properties. For instance, we have
\begin{equation}
\label{eq:container-inclusion}
F_1 \subseteq F_2 \subseteq \cdots \subseteq F_{|I|} = I \subseteq C_{|I|} \subseteq C_{|I|-1} \subseteq \cdots \subseteq C_2 \subseteq C_1
\end{equation}
so that for each $t = 1,2,\ldots,|I|$, $F_t \subseteq I \subseteq C_t$.

The construction of the algorithm also guarantees that the $t$-th container of an independent set is the same as the $t$-th container of its $t$-th fingerprint.

\begin{proposition}
\label{prop:container-closure}
For any graph $G = (V,E)$, any independent set $I$ in $G$, and any $t,$ the fingerprint $F_t(I)$ and container $C_t(I)$ of $I$ satisfy 
\[
C_t\big( F_t(I) \big) = C_t(I).
\]
\end{proposition}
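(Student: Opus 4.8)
The plan is to run \Cref{alg:FingerprintContainer} on the two inputs $(G,I)$ and $(G, F_t(I))$ in parallel and show by induction on the step counter $s = 1, 2, \ldots, t$ that the two executions are identical through step $s$; that is, they select the same vertex $v_s$, build the same fingerprint $F_s$, and build the same container $C_s$. Since $F_t(I) = F_t$ is itself an independent set (it is a subset of $I$), the algorithm is well-defined on it, and running it for $t$ steps produces a $t$-th container; the claim is exactly that this container equals $C_t(I)$.

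First I would set up the induction. The base case $s = 0$ is immediate: both executions initialize $F_0 = \emptyset$ and $C_0 = V$. For the inductive step, suppose the two executions agree through step $s-1$, so they share the same fingerprint $F_{s-1}$ and container $C_{s-1}$, with $s \le t$. The vertex chosen at step $s$ in the run on $I$ is the vertex of $I \setminus F_{s-1}$ of largest degree in $G[C_{s-1}]$; the vertex chosen at step $s$ in the run on $F_t$ is the vertex of $F_t \setminus F_{s-1}$ of largest degree in $G[C_{s-1}]$. The key observation is that these maximizations pick the same vertex. One direction is trivial since $F_t \setminus F_{s-1} \subseteq I \setminus F_{s-1}$. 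For the other direction, note that the greedy choices made by the run on $I$ are precisely $v_1, v_2, \ldots, v_{|I|}$ in order, and by construction each $v_j$ for $j > s-1$ still lies in the current container (the algorithm never removes a vertex of $I$, since $I$ is independent so no $v_j$ is a neighbour of an earlier $v_i$, and $v_j$ was chosen as a maximum-degree vertex so no earlier step removes it as a higher-degree vertex). In particular $v_s \in C_{s-1}$, and $v_s$ has the largest degree in $G[C_{s-1}]$ among \emph{all} of $I \setminus F_{s-1}$, hence a fortiori among $F_t \setminus F_{s-1} = \{v_s, v_{s+1}, \ldots, v_t\}$ (breaking ties by the same rule, e.g. lowest index, in both runs). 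So both runs select $v_s$, giving $F_s$ the same in both; and since $C_s$ is obtained from $C_{s-1}$ and $v_s$ by a deterministic rule (delete neighbours of $v_s$, delete vertices of degree exceeding $\deg_{G[C_{s-1}]}(v_s)$), the containers $C_s$ agree as well. This closes the induction, and taking $s = t$ gives $C_t(F_t(I)) = C_t(I)$. The extension to $t > |I|$ is handled by the convention $C_t = F_t = I$, under which the identity reads $C_t(I) = C_t(I)$.

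I expect the main obstacle to be pinning down the tie-breaking and the claim that every $v_j$ with $j \ge s$ survives in $C_{s-1}$ — i.e. that the greedy run on $I$ never deletes a future fingerprint vertex from its own container. This needs the two structural facts about the algorithm: (i) $I \subseteq C_j$ for all $j \le |I|$ (from \eqref{eq:container-inclusion}), so in particular no $v_j$ is ever removed, and (ii) the degree-comparison rule removes only strictly-higher-degree vertices, so it cannot later remove a vertex that was a maximizer at an earlier stage in a shrinking container. Once these are in hand — and fact (i) is already recorded in the excerpt — the rest is a routine parallel-execution induction.
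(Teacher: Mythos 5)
Your proof is correct and takes essentially the same approach as the paper's one-sentence argument — the paper simply asserts that the greedy algorithm run on $F_t(I)$ selects the same vertices $v_1,\ldots,v_t$ as the run on $I$, and your parallel-execution induction is the natural way to make that precise. One small simplification worth noting: once you know $C_{s-1}$ agrees in both runs, you don't really need the auxiliary observation that future fingerprint vertices survive in the container; it suffices that $v_s \in F_t \setminus F_{s-1} \subseteq I \setminus F_{s-1}$ and $v_s$ is the degree-maximizer over the larger set, hence also over the smaller one (with consistent tie-breaking), which is the point you already make.
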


\begin{proof}
If $t> |I|,$ then $F_t(I)=I=C_t(I)$ and the equality holds.
Now consider the case that $t \leq |I|.$ 
If $v_1,\ldots,v_t$ are the vertices selected in the first $t$ rounds of the greedy algorithm on input $I$, then the algorithm selects the same vertices and forms the same container $C_t$ when provided with input $F_t(I)$ instead of $I$.
\end{proof}

Another basic property of the containers of any independent set is that we can bound their maximum degree in the following way.

\begin{proposition}
\label{prop:container-degree}
For any graph $G = (V,E)$ on $|V| = n$ vertices, any independent set $I$ in $G$, and any $t \ge 1,$ the $t$-th container of $I$ has maximum degree bounded by $\Delta(G[C_t]) \le n/t$.
\end{proposition}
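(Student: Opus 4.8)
The plan is to track how the maximum degree of the container drops as the greedy algorithm proceeds. The key observation is that in round $i$, the algorithm picks $v_i$ to be the vertex of $I \setminus F_{i-1}$ of largest degree in $G[C_{i-1}]$, and then \emph{deletes from the container every vertex whose degree in $G[C_{i-1}]$ exceeds $\deg_{G[C_{i-1}]}(v_i)$}. Hence immediately after round $i$, every vertex remaining in $C_i$ had degree at most $d_i := \deg_{G[C_{i-1}]}(v_i)$ in the \emph{previous} graph $G[C_{i-1}]$; since $C_i \subseteq C_{i-1}$, passing to the induced subgraph only removes edges, so $\Delta(G[C_i]) \le d_i$ as well. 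The degrees $d_1, d_2, \ldots$ are therefore a handle on $\Delta(G[C_t])$: in fact $\Delta(G[C_t]) \le d_t \le d_{t-1} \le \cdots$, so it suffices to show $d_t \le n/t$ for the relevant value of $t$.

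First I would set up the case $t \le |I|$ (the case $t > |I|$ is immediate since then $C_t = I$ is an independent set and $\Delta(G[C_t]) = 0 \le n/t$). Consider the first $t$ rounds. In round $i$, the vertex $v_i$ has degree exactly $d_i$ in $G[C_{i-1}]$, meaning it has $d_i$ neighbours inside $C_{i-1}$; the removal step at the end of round $i$ deletes these $d_i$ neighbours from the container (they are among the vertices removed, possibly along with others). Crucially, because the $C_j$ are nested and the vertices $v_1, \ldots, v_t$ all survive to the end (they lie in $I \subseteq C_j$ for every $j$), the neighbourhoods removed in distinct rounds are disjoint subsets of $V$: a neighbour of $v_i$ in $C_{i-1}$ is removed from the container in round $i$ and never seen again, so it cannot be counted as a neighbour of $v_j$ in $C_{j-1}$ for $j > i$. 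Therefore the sets of vertices deleted across rounds $1, \ldots, t$ are pairwise disjoint, and their sizes are at least $d_1, d_2, \ldots, d_t$ respectively. Summing gives $d_1 + d_2 + \cdots + d_t \le |V \setminus C_t| \le n$.

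Now I would combine this with monotonicity of the degrees. I claim $d_1 \ge d_2 \ge \cdots \ge d_t$. Indeed, at the start of round $i+1$ the vertex $v_{i+1} \in I \setminus F_i$ was already present in $C_{i-1}$ (it is in $I$), and at the end of round $i$ it survived the pruning, which means $\deg_{G[C_{i-1}]}(v_{i+1}) \le \deg_{G[C_{i-1}]}(v_i) = d_i$; since $C_i \subseteq C_{i-1}$ we get $d_{i+1} = \deg_{G[C_i]}(v_{i+1}) \le \deg_{G[C_{i-1}]}(v_{i+1}) \le d_i$. Hence $t \cdot d_t \le d_1 + \cdots + d_t \le n$, so $d_t \le n/t$, and therefore $\Delta(G[C_t]) \le d_t \le n/t$.

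The one step that needs the most care — and where I expect the only real subtlety to lie — is the disjointness claim for the removed neighbourhoods: one must be sure that a vertex counted as a neighbour of $v_i$ in $C_{i-1}$ is genuinely gone from $C_i$ (true, since $(v_i, w) \in E$ puts $w$ in the first removal set of round $i$) and cannot reappear later (true, since the $C_j$ are nested and decreasing). A clean way to phrase this is: the $d_i$ neighbours of $v_i$ inside $C_{i-1}$ all lie in $C_{i-1} \setminus C_i$, and the sets $C_{i-1} \setminus C_i$ for $i = 1, \ldots, t$ are pairwise disjoint with union $C_0 \setminus C_t \subseteq V$, giving $\sum_{i=1}^t d_i \le n$ directly. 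Everything else is bookkeeping with the nesting in \eqref{eq:container-inclusion} and the definition of $v_i$ in \Cref{alg:FingerprintContainer}.
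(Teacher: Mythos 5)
Your proof is correct and rests on essentially the same observation as the paper's: in each round the algorithm removes at least $d_i := \deg_{G[C_{i-1}]}(v_i)$ vertices from the container (the neighbours of $v_i$ in $C_{i-1}$ all land in $C_{i-1}\setminus C_i$, and these difference sets are disjoint), so $\sum_{i\le t} d_i \le n$. You then prove the monotonicity $d_1 \ge d_2 \ge \cdots$ and conclude $d_t \le n/t$ directly, whereas the paper skips monotonicity, instead using pigeonhole to find a single round $j\le t$ with $|C_{j-1}\setminus C_j|\le n/t$ and transferring the degree bound to $C_t$ via the nesting $C_t\subseteq C_j$; both routes are valid and of comparable length, with your monotonicity claim playing the same structural role as the paper's appeal to nesting.
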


\begin{proof}
If $t > |I|,$ then $C_t=I,$ and so the maximum degree is $0.$
Now, consider any $t \le |I|$.
Since the containers satisfy the containment $V = C_0 \supseteq C_1 \supseteq \dots \supseteq C_t$, 
\[
n \ge |C_0| - |C_t| = \sum_{i=1}^t \big(|C_{i-1}| - |C_i| \big) = 
\sum_{i=1}^t |C_{i-1} \setminus C_i|
\]
and there must be an index $j \in [t]$ for which $|C_{j-1} \setminus C_j| \leq n/t$. This means that the vertex $v_j$ chosen in the $j$th round of the greedy algorithm must have degree at most $n/t$ in the induced subgraph $G[C_{j-1}]$. All vertices in $C_{j-1}$ that had higher degree than $v_j$ in $G[C_{j-1}]$ were removed when constructing $C_j$, so all vertices in $C_j$ have degree at most $n/t$ in $G[C_j]$. Since $C_t \subseteq C_j$, the maximum degree of $G[C_t]$ is bounded above by $n/t$ as well.
\end{proof}

\section{Testing Cliques and Independent Sets}
\label{sect:testing-IS}
We complete the proof of \Cref{thm:clique} in this section. The proof applies to both the \textsc{$\rho$-Clique} and the \textsc{$\rho$-IndepSet} properties (by considering the complement of the tested graph) but the latter fits in more naturally with the graph container method, so in the rest of the section we discuss the result entirely in terms of independent sets.

\subsection{Container Shrinking Lemma}
\label{sect:containerShrinking}

The main tool in the proof of the Graph Container Lemma for testing independent sets, \Cref{lem:GCL-indepsetNew}, is a shrinking lemma that, informally speaking, says that many vertices are removed in each round of \Cref{alg:FingerprintContainer}.

To gain some intuition for this shrinking lemma, observe that if $|C_t| \geq \rho n,$ then $G[C_t]$ has edge density at least $\frac{2\epsilon}{\rho^2}.$
Then, it is not hard to show that at least roughly $\frac{\epsilon}{2\rho^2}|C_t|$ vertices are removed when constructing $C_{t+1},$ which means that $|C_{t+1}| \leq \left(1-\frac{\epsilon}{2\rho^2}\right)|C_t|.$
This is because in the $t$-th step of \Cref{alg:FingerprintContainer} either $v_t$ has degree larger than $\frac{\epsilon}{2\rho^2}|C_t|$ in $G[C_t],$ or all vertices with degree larger than $\frac{\epsilon}{2\rho^2}|C_t|$ in $G[C_t]$ are removed when constructing $C_{t+1}$, and there are at least $\frac{\epsilon}{2\rho^2}|C_t|$ such vertices. 

The following lemma shows that if $G[C_t]$ contains a large sparse subgraph, then the container, or more specifically the number of vertices in the container outside of the sparse subgraph, shrinks by much more than a factor of $(1-\frac{\epsilon}{2\rho^2}).$

\begin{lemma}[Container Shrinking Lemma]
\label{lem:containers-shrinking}
Let $G = (V,E)$ be a graph on $n$ vertices that is $\epsilon$-far from the \textsc{$\rho$-IndepSet} property,  let $I$ be an independent set in $G$, and let $t < |I|$ be an index for which the $t$-th container $C_{t}$ of $I$ has cardinality $|C_{t}| \ge \rho n$. For any $\alpha > 0$, if there exists a subset $C \subseteq C_{t+1}$ of $(\rho - \alpha) n$ vertices where $G[C]$ contains at most $\frac{\epsilon}{4} n^2$ edges, then
\[
\left|C_{t+1} \setminus C\right| \le \left( 1 - \frac{\epsilon}{4\rho \alpha} \right) \left| C_{t} \setminus C \right|.
\]
\end{lemma}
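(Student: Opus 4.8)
The plan is to reduce the claimed contraction to a lower bound on the number of vertices deleted in round $t+1$ of \Cref{alg:FingerprintContainer}, and then to establish that lower bound by combining the $\epsilon$-farness hypothesis with the maximum-degree bound of \Cref{prop:container-degree}. Write $R := C_t\setminus C_{t+1}$ for the set of vertices deleted in that round. Since $C\subseteq C_{t+1}$, the set $R$ is disjoint from $C$, so $|C_{t+1}\setminus C|=|C_t\setminus C|-|R|$, and the assertion of the lemma is exactly that $|R|\ge\frac{\epsilon}{4\rho\alpha}\,|C_t\setminus C|$. By the deletion rule, $R$ is the union of the set $N$ of $d:=\deg_{G[C_t]}(v_{t+1})$ neighbours of $v_{t+1}$ in $G[C_t]$ and the set $H$ of all $w\in C_t$ with $\deg_{G[C_t]}(w)>d$; hence $|R|\ge\max(d,|H|)$, and every vertex surviving into $C_{t+1}$ has $G[C_t]$-degree at most $d$.

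The core of the argument is to show that $G[C_t]$ is dense, so that the greedy step must delete a lot. Two lower bounds on its edge count are available. First, since $G$ is $\epsilon$-far from \textsc{$\rho$-IndepSet}, every $\rho n$-subset of $C_t$ spans at least $\epsilon n^2$ edges; double-counting over all such subsets yields $e(G[C_t]) \ge \epsilon\,\frac{|C_t|(|C_t|-1)}{\rho^2}$, a Kruskal--Katona-type estimate that is the essential one when $|C_t|$ is much larger than $\rho n$. Second, the sparse set $C$ can be exploited directly: adjoining any $\alpha n$ vertices of $D := C_t\setminus C$ (there are at least $\alpha n$ of them because $|C_t|\ge\rho n$) to $C$ produces a $\rho n$-set, which spans $\ge\epsilon n^2$ edges, of which $\ge \frac34\epsilon n^2$ must meet the $\alpha n$ adjoined vertices since $e(G[C])\le\frac\epsilon4 n^2$; averaging over the choice of the $\alpha n$ vertices shows $\sum_{v\in D}\deg_{G[C_t]}(v) \ge \frac{3\epsilon n |D|}{8\alpha}$, i.e.\ the average $G[C_t]$-degree over $D$ is $\ge \frac{3\epsilon n}{8\alpha}$. (The same padding argument, used crudely, also forces $\alpha\ge\frac{3\epsilon}{4\rho}$, so the claimed factor $\frac{\epsilon}{4\rho\alpha}$ is always at most $\frac13$, which is what makes the contraction nontrivial.)

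With these estimates I would finish by a dichotomy on the threshold $d$. If $d$ is at least a fixed constant fraction of the average degree of $G[C_t]$, then $|R|\ge d$ already dominates $\frac{\epsilon}{4\rho\alpha}|C_t\setminus C|$ once the edge lower bounds are substituted. Otherwise $d$ is small, so almost all of the degree mass must sit on vertices of degree exceeding $d$; bounding the degree of each such vertex by $\Delta(G[C_t])\le n/t$ via \Cref{prop:container-degree} forces $|H|$, and hence $|R|$, to be large. In either branch one plugs in the bounds from the previous paragraph and simplifies to obtain $|R|\ge\frac{\epsilon}{4\rho\alpha}|C_t\setminus C|$.

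The main obstacle is the bookkeeping in this last step: the constants have to be arranged so that one clean contraction factor $\frac{\epsilon}{4\rho\alpha}$ comes out uniformly over the whole range $\rho n\le|C_t|\le n$. The case $|C_t|\approx\rho n$ is routine, but when $|C_t|\gg\rho n$ the naive bound $e(G[C_t])\ge\epsilon n^2$ is hopelessly weak and the $\Omega\!\big(\epsilon|C_t|^2/\rho^2\big)$ bound is indispensable; here one also wants to use the non-triviality constraint $\epsilon<\rho^2$, which limits how large $|C_t|$ can actually be relative to $\rho n$, to make the arithmetic close.
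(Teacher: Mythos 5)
Your high-level reduction — namely that the lemma is equivalent to showing that round $t+1$ deletes at least $\frac{\epsilon}{4\rho\alpha}\left|C_t\setminus C\right|$ vertices, and that the deleted set $R=C_t\setminus C_{t+1}$ consists of neighbours of $v_{t+1}$ together with all vertices of $G[C_t]$-degree exceeding $d:=\deg_{G[C_{t}]}(v_{t+1})$ — matches the paper. But the way you propose to extract that many deleted vertices from the density of $G[C_t]$ has a genuine gap, and it is not just bookkeeping.

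The problem is in your ``Case B'' branch, where $d$ is small. You hope to convert your degree-sum lower bound $\sum_{v\in D}\deg_{G[C_t]}(v)\gtrsim \frac{\epsilon n|D|}{\alpha}$ (where $D=C_t\setminus C$) into a lower bound on $|H|$, the number of vertices of degree $>d$, by capping each such vertex's degree at $\Delta(G[C_t])\le n/t$ via \Cref{prop:container-degree}. But this cap is vacuous for small $t$. The Container Shrinking Lemma is invoked in the proof of \Cref{lem:GCL-indepset} for all $t=0,1,\dots,t^\ast-1$, including $t=0$, where $C_0=V$ and $n/t$ is meaningless, and $t=1,2,\dots$, where it is still far too weak: carrying out your estimate gives $|H|\gtrsim\frac{|D|\epsilon}{4\alpha}\cdot\frac{3n-|D|/\rho}{\Delta}$, and since $|D|$ can be as large as $n$ while $\rho$ can be small, the numerator $3n-|D|/\rho$ can be negative. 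No choice of the constant threshold separating your two cases repairs this, because the degree-sum bound counts edges running from $D$ into $C$, and such edges need not cause any deletions when $d$ is small.

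The paper sidesteps exactly this issue by dichotomizing not on $d$ but on whether $C$ itself contains a vertex of $G[C_t]$-degree at least $\frac{\epsilon}{4\rho\alpha}|D|$. If it does, that vertex survives into $C_{t+1}$, forcing $d$ to be at least that large, so $|R|\ge d$ suffices. If it does not, then all of $C$ has small $G[C_t]$-degree, and one can run a random-subsample argument: a uniformly random $\alpha n$-subset $R\subseteq D$ extends $C$ to a $\rho n$-set, so $G[R\cup C]$ has $\ge\epsilon n^2$ edges, while both $e(G[C])$ and the expected number of $C$--$R$ edges are at most $\frac{\epsilon}{4}n^2$; hence $\E\bigl[e(G[R])\bigr]\ge\frac{\epsilon}{2}n^2$, which pins down the \emph{internal} density of $G[D]$ at $\ge\epsilon/\alpha^2$. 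This is strictly more information than your degree-sum bound because it concerns edges entirely within $D$, and it immediately produces $\Omega\!\left(\frac{\epsilon}{\rho\alpha}|D|\right)$ vertices of $D$ with $G[D]$-degree $\Omega\!\left(\frac{\epsilon}{\rho\alpha}|D|\right)$ each, all of which must be deleted regardless of the value of $t$. So the key idea you are missing is the low-degree-of-$C$ hypothesis that makes the random-subsample argument go through, yielding density of $G[D]$ rather than merely a degree sum over $D$.
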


\begin{proof}
First recall that $C \subseteq C_{t+1} \subseteq C_t,$ and so it holds that $|C_{t+1} \setminus C| = |C_t\setminus C| - |C_t \setminus C_{t+1}|.$
To prove the lemma we will give a lower bound on $|C_t \setminus C_{t+1}|.$
First suppose that $C$ contains a vertex $v$ with degree at least $\frac{\epsilon}{4\rho \alpha}|C_{t} \setminus C|$ in the graph $G[C_{t}|.$
Then, when constructing $C_{t+1}$ from $C_t$, at least this number of vertices are removed from $C_t$ since the vertex $v_{t+1}$ selected by the greedy algorithm has degree at least as large as that of $v$ in $G[C_t],$ otherwise $v$ would not have been in $C_{t+1}$ (and so therefore would not be in $C$).

Consider now the setting where all the vertices in $C$ have degree at most $\frac{\epsilon}{4\rho \alpha}|C_{t} \setminus C|$ in $G[C_t]$.
We claim that $G[C_{t} \setminus C]$ contains at least $\frac{\epsilon}{4 \rho \alpha} |C_{t} \setminus C|^2$ edges. This claim suffices to complete the proof of the lemma, since it implies that at least $\frac{\epsilon}{4 \rho \alpha} |C_{t} \setminus C|$ vertices in $C_t \setminus C$ have degree at least $\frac{\epsilon}{4 \rho \alpha} |C_{t} \setminus C|$ in $G[C_t \setminus C]$, and thus that at least this many vertices are removed from $C_t$ when constructing $C_{t+1}$ by \Cref{alg:FingerprintContainer}.

We now complete the proof of the claim that $G[C_{t} \setminus C]$ contains at least $\frac{\epsilon}{4 \rho \alpha} |C_{t} \setminus C|^2$ edges.
Let $R$ be chosen uniformly at random from all subsets of $C_t \setminus C$ of size $\alpha n$.
Since $G$ is $\epsilon$-far from having independent sets of size $\rho n$, the induced subgraph $G[R \cup C]$ contains at least $\epsilon n^2$ edges.
By the lemma hypothesis, $G[C]$ contains at most $\frac\epsilon4 n^2$ edges.
Finally, by the maximum degree of the vertices of $C$ in $G[C_t],$ the number of edges with one endpoint in $C$ and one endpoint in $C_t \setminus C$ is at most $|C| \cdot \frac{\epsilon}{4\rho \alpha}|C_{t} \setminus C| \leq \frac{\epsilon}{4\alpha}|C_{t} \setminus C| n.$
Each such edge is included in $G[R \cup C]$ with probability $|R| / |C_t \setminus C|,$ and so the expected number of edges between $C$ and $R$ is at most $\frac{\epsilon}{4\alpha} |R| n =\frac{\epsilon}{4}n^2$. 
Therefore, the expected number of edges in $G[R]$ is at least
\[
\frac{\epsilon}{2} n^2
= \frac{\epsilon}{2\alpha^2}|R|^2 
\geq \frac{\epsilon}{\alpha^2} \binom{|R|}{2}.
\]
In other words, the random graph $R$ chosen uniformly at random from the subgraphs of $C_t \setminus C$ of cardinality $\alpha n$ has expected density at least $\frac{\epsilon}{\alpha^2}$. This implies that $C_t \setminus C$ also has density at least $\frac{\epsilon}{\alpha^2}$. So it contains at least $\frac{\epsilon}{\alpha^2} \binom{|C_t \setminus C|}{2} \geq 
\frac{\epsilon}{\rho\alpha} \binom{|C_t \setminus C|}{2}
\ge \frac{\epsilon}{4\rho \alpha}|C_t \setminus C|^2$ edges, as we wanted to show.
\end{proof}

\subsection{Proof of Graph Container Lemma I}

In this section, we prove the Graph Container Lemma for testing independent sets.
We first use the Container Shrinking Lemma to prove the following alternative formulation of the graph container lemma for testing independent sets. 
We then show how this alternative graph container lemma implies \Cref{lem:GCL-indepsetNew}.

\begin{lemma}
\label{lem:GCL-indepset} 
Let $G = (V,E)$ be a graph on $n$ vertices that is $\epsilon$-far from the \textsc{$\rho$-IndepSet} property. For any independent set $I$ in $G$, there exists an index $t \le \frac{8 \rho^2}{\epsilon}  \ln(\frac{2\rho}\epsilon)$ such that the size of the $t$-th container of $I$ is bounded by
\begin{equation}
\label{eq:GCL-indepset}
|C_t| \le \left( \rho - t \cdot \frac{\epsilon}{8 \rho \ln(\frac{2\rho}{\epsilon})} \right) n
\end{equation}
and $G[C_t]$ contains at most $\epsilon n^2/4$ edges.
\end{lemma}

\begin{proof}
Let $C$ denote the first container in the sequence $C_1,C_2,\ldots$ that contains at most $\frac\epsilon4 n^2$ edges.
The existence of $C$ is guaranteed because $C_{|I|+1},$ which by definition is equal to $I,$ is an independent set and has no edges.
Define $\alpha$ such that $|C| = (\rho - \alpha) n$. The value of $\alpha$ is bounded by $\frac{\epsilon}{2\rho} \le \alpha \le \rho$, where the lower bound on $\alpha$ is due to the fact that $G$ is $\epsilon$-far from the \textsc{$\rho$-IndepSet} property.

Define $t^*$ to be the largest index for which $|C_{t^*}| \ge \rho n$.
First observe that $t^* \leq |I|$ because otherwise $C_{t^*}$ is an independent set, which contradicts the fact that $G$ is $\epsilon$-far from the \textsc{$\rho$-IndepSet} property.
Hence, by \Cref{lem:containers-shrinking} applied to all values of $t = 0,1,2, \ldots, t^*-1$,  
\[
|C_{t^*} \setminus C| \le \left(1 - \frac{\epsilon}{4\rho \alpha}\right)^{t^*}n
\]
and since $|C_{t^*} \setminus C| \ge \alpha n$, we conclude that $t^* \le \frac{4\rho \alpha}{\epsilon} \ln(1/\alpha) \le \frac{4\rho \alpha}{\epsilon} \ln (2\rho/\epsilon)$.

Consider now values $t > t^*$.
When $G[C_t]$ contains more than $\frac{\epsilon}{4} n^2$ edges, then the vertices in $C_t$ have average degree at least $\frac{\epsilon}{4\rho} n$ in this graph. 
This means that at least $\frac{\epsilon}{4\rho}n$ vertices in $C_t$ have degree at least $\frac{\epsilon}{4\rho} n$ in $G[C_t]$, so at least $\frac{\epsilon}{4\rho}n$ vertices are removed in the $t$-th iteration of \Cref{alg:FingerprintContainer}. 
Since $|C_{t^*+1} \setminus C| \le \alpha n$, there can be at most $\frac{4\rho \alpha}{\epsilon}$ such iterations before we reach $C.$
Hence, in total there are at most $\frac{4\rho \alpha}{\epsilon} \ln(2\rho/\epsilon)  + \frac{4\rho \alpha}{\epsilon} \le \frac{8\rho \alpha}{\epsilon} \ln (2\rho/\epsilon)$ iterations before we reach $C,$ and rearranging for $\alpha$ shows that the container $C$ satisfies the conclusion of the lemma. 
\end{proof}

\Cref{lem:GCL-indepsetNew}, restated below, follows almost immediately from \Cref{lem:GCL-indepset}.

\GCLindepsetNew*

\begin{proof}
For any independent set $I$ in $G$, let $t_I$ denote an index that satisfies the container cardinality bound in \Cref{lem:GCL-indepset}. 
    Let \[ \mathcal{F}=\biggl\{F_{t_I}(I) : I \textrm{ is an independent set in } G \textrm{ and } |I| \geq \frac{8 \rho^2}{\epsilon}  \ln\left(\frac{2\rho}\epsilon\right)\biggr\},\] and let the container function $\mathcal{C}: \mathcal{F} \rightarrow 2^V$ be the function that takes a fingerprint $F \in \mathcal{F}$ and returns $C_{|F|}(F),$ where $C_t(\cdot)$ and $F_t(\cdot)$ are the functions defined by \Cref{alg:FingerprintContainer}, as discussed following the algorithim.
    First observe that for any independent set $I$ we have that $|F_{t_I} (I)| \leq t_I.$
    Hence, by the basic properties of containers and \Cref{prop:container-closure}, we have that $F_{t_I}(I) \subseteq I \subseteq C_{t_I}(I)$ and $C_{t_I}(I) \subseteq C_{|F_{t_I}(I)|}(I) = C_{|F_{t_I}(I)|}(F_{t_I}(I))=\mathcal{C}(F_{t_I}(I)),$ which proves the first conclusion of the lemma.
    
    We now show the desired upper bounds on the size of every fingerprint $F \in \mathcal{F}$ and associated container $\mathcal{C}(F).$
    Since $|F_{t_I} (I)| \leq t_I$ holds for any independent set $I,$ then by the upper bound of $t_I \leq \frac{8 \rho^2}{\epsilon}  \ln\left(\frac{2\rho}\epsilon\right)$ from \Cref{lem:GCL-indepset} we get the upper bound on $|F_{t_I}(I)|.$
    Furthermore, for any independent set $I$ with $|I| \geq \frac{8 \rho^2}{\epsilon}  \ln\left(\frac{2\rho}\epsilon\right),$ observe that $|F_{t_I}(I)| = t_I.$
    This implies that \[\mathcal{C}(F_{t_I}(I)) = C_{|F_{t_I}(I)|}(F_{t_I}(I))=C_{t_I}(F_{t_I}(I))=C_{t_I}(I),\] where the third equality uses \Cref{prop:container-closure}.
    Hence, by the upper bound on $|C_{t_I}(I)|$ in \Cref{lem:GCL-indepset}, we get the desired upper bound on $|\mathcal{C}(F_{t_I}(I))|.$
\end{proof}

\begin{remark}
Since \Cref{lem:GCL-indepset} applies for any independent set it is possible to prove \Cref{lem:GCL-indepsetNew} so that the first conclusion of the lemma holds for any independent set, instead of just independent sets of size at least $\frac{8 \rho^2}{\epsilon}  \ln\left(\frac{2\rho}\epsilon\right).$ However this requires a slight variation on the definition of a fingerprint, and for \Cref{thm:clique} we only need that \Cref{lem:GCL-indepsetNew} applies for independent sets of size at least roughly $\frac{\rho^2 \ln^3(1/\epsilon)}{\epsilon}.$
\end{remark}

\subsection{Proof of \autoref{thm:clique}}\label{sect:proof}
We are now ready to use \Cref{lem:GCL-indepsetNew} to complete the proof of \Cref{thm:clique}. The proof of the theorem also makes use of the following form of Chernoff's bound for hypergeometric distributions.

\begin{lemma}[Chernoff's Bound]
\label{lem:chernoff}
Let $X$ be drawn from the hypergeometric distribution $H(N,K,m)$ where $m$ elements are drawn without replacement from a population of $N$ elements, $K$ of which are marked, and $X$ represents the number of marked elements that were drawn.
Then for any $\vartheta \geq \E[X]$,
\[
\Pr\big[X \geq \vartheta\big] \leq \exp\left(-\frac{(\vartheta-\E[X])^2}{\vartheta+\E[X]}\right).
\]
\end{lemma}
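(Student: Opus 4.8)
The plan is to deduce the lemma from the corresponding tail bound for the \emph{binomial} distribution, using the classical fact that sampling without replacement is no less concentrated than sampling with replacement. First I would set $\mu := E[X] = nK/N$ and introduce a binomial random variable $Y$ with parameters $n$ and $p := K/N$, so that $E[Y] = \mu$ as well. The key input is Hoeffding's comparison inequality: for every convex function $\phi\colon\mathbb{R}\to\mathbb{R}$ one has $E[\phi(X)] \le E[\phi(Y)]$. Applying this to $\phi(x) = e^{sx}$ for an arbitrary $s > 0$ yields the moment-generating-function bound
\[
E\big[e^{sX}\big] \;\le\; E\big[e^{sY}\big] \;=\; \big(1 - p + p\,e^{s}\big)^{n}.
\]

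With this in hand the rest is the textbook Chernoff argument. For any $s > 0$, Markov's inequality applied to $e^{sX}$ gives
\[
\Pr\big[X \ge \vartheta\big] \;\le\; e^{-s\vartheta}\,E\big[e^{sX}\big] \;\le\; e^{-s\vartheta}\big(1 - p + p\,e^{s}\big)^{n},
\]
and minimizing the right-hand side over $s > 0$ is exactly the optimization that produces the multiplicative Chernoff bound $\Pr[Y \ge (1+\delta)\mu] \le \exp\big(-\delta^{2}\mu/(2+\delta)\big)$, valid for every $\delta \ge 0$. Hence $\Pr[X \ge \vartheta] \le \exp\big(-\delta^{2}\mu/(2+\delta)\big)$ whenever $\vartheta = (1+\delta)\mu$ with $\delta \ge 0$, i.e.\ whenever $\vartheta \ge \mu$.

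It then remains only to rewrite this in the form stated. Setting $\delta = (\vartheta - \mu)/\mu$, which is nonnegative since $\vartheta \ge \mu$, a one-line computation gives
\[
\frac{\delta^{2}\mu}{2+\delta}
\;=\; \frac{(\vartheta-\mu)^{2}/\mu}{(\vartheta+\mu)/\mu}
\;=\; \frac{(\vartheta-\mu)^{2}}{\vartheta+\mu},
\]
which is precisely the exponent in the claimed inequality, and substituting $\mu = E[X]$ finishes the proof.

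The only step here that is more than bookkeeping is the appeal to Hoeffding's convex-order comparison between sampling with and without replacement; once the bound $E[e^{sX}] \le (1-p+pe^{s})^{n}$ is available, everything else is the standard derivation of the multiplicative Chernoff bound. An alternative would be to bound the hypergeometric moment generating function directly, but that route is more cumbersome, so invoking the comparison inequality — which is classical and appears in standard references on concentration inequalities — is the cleanest path. One should also be mildly careful to use the particular numerical form of the binomial Chernoff bound that matches the lemma after the substitution above, since several equivalent forms are in circulation.
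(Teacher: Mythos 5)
Your proof is correct and follows essentially the same path as the paper: both start from the standard multiplicative Chernoff bound $\Pr[X \ge (1+\delta)E[X]] \le \exp(-\delta^2 E[X]/(2+\delta))$ for hypergeometric random variables, and then apply the substitution $\delta = (\vartheta - E[X])/E[X]$ and simplify. The only difference is that where the paper simply cites that the binomial Chernoff bound carries over to hypergeometric variables (pointing to a reference), you spell out the reason — Hoeffding's convex-order comparison between sampling with and without replacement, applied to $\phi(x) = e^{sx}$ to transfer the MGF bound — which is precisely the content of the cited fact.
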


\begin{proof}
A standard multiplicative form of Chernoff's bound states that for all $\delta > 0$, the random variable $X$ satisfies $\Pr\big[X \geq (1+\delta)\E[X]\big] \leq \exp\left(-\frac{\delta^2\E[X]}{2+\delta}\right)$. (This standard bound is usually stated for sums of independent variables, but the same bound also applies to hypergeometric random variables as well. See, e.g.,~\cite{Mulzer2018}). Using the identity $\vartheta=\left(1+\frac{\vartheta-\E[X]}{\E[X]}\right)\E[X]$, we can apply this inequality with the parameter $\delta=\frac{\vartheta-\E[X]}{\E[X]}$ and simplify.
\end{proof}

We are now ready to prove \autoref{thm:clique}, restated below in its precise form (with the polylogarithmic term) and for the \textsc{$\rho$-IndepSet} property. 

\newtheorem*{clique-thm-precise}{Theorem 1}
\begin{clique-thm-precise}[Precise formulation]
The sample complexity of the \textsc{$\rho$-IndepSet} property is 
\[
\mathcal{S}_{\rho\textsc{-IndepSet}}(n,\epsilon) = O\left(\frac{\rho^3}{\epsilon^2} \ln^3\left(\frac 1\epsilon \right)\right).
\]
\end{clique-thm-precise}

\begin{proof}
Let $S$ be a random set of $s = c \cdot \frac{\rho^3}{\epsilon^2} \ln^3\left(\frac 1\epsilon \right)$ vertices drawn uniformly at random from $V$ without replacement, where $c$ is a large enough constant.
Note that for clarity of presentation, in the rest of the proof we ignore all integer rounding issues as they do not affect the asymptotics of the final result.

If $G$ contains a $\rho n$ independent set, then $S$ contains at least $\rho s$ vertices from this independent set with probability at least $\frac12$, since the number of such vertices follows a hypergeometric distribution, and the median of this distribution is at least $\rho s$~\cite{Neumann70,KaasBurhman1980}.

In the remainder of the proof we upper bound the probability that $S$ contains a $\rho s$-independent set when $G$ is $\epsilon$-far from containing a $\rho n$-independent set.
We consider the collection of fingerprints $\mathcal{F}$ and the container function $\mathcal{C}: \mathcal{F} \rightarrow 2^V$ from \Cref{lem:GCL-indepsetNew} which states that for every independent set $I$ in $G$ with $|I| > \frac{8 \rho^2}{\epsilon}  \ln\left(\frac{2\rho}\epsilon\right)$ there exists $F \in \mathcal{F}$ such that $F \subseteq I \subseteq \mathcal{C}(F).$
Hence, $S$ contains an independent set of size $\rho s > \frac{8 \rho^2}{\epsilon}  \ln\left(\frac{2\rho}\epsilon\right)$ only when there exists $F \in \mathcal{F}$ such that $S$ contains $F$ \textbf{and} $S$ contains at least $\rho s$ vertices from $\mathcal{C}(F).$
In other words, the probability that $S$ contains a $\rho s$ independent set is at most
\[\sum_{F \in \mathcal{F}} \Pr[F \subseteq S] \cdot \Pr\left[\left|\mathcal{C}(F) \cap S\right| \geq \rho s ~\mid~ F \subseteq S\right].\]

For a specific $F \in \mathcal{F},$ the probability that $S$ contains $F$ is at most
\[\frac{\binom{n-|F|}{s-|F|}}{\binom{n}{s}} = \frac{\binom{s}{|F|}}{\binom{n}{|F|}},\]
because there are at most $\binom{n}{s}$ ways to choose $S,$ and $\binom{n-|F|}{s-|F|}$ ways to choose an $S$ that includes $F.$
The probability that $S$ contains at least $\rho s$ vertices from $\mathcal{C}(F),$ conditioned on the fact that $S$ contains $F,$ can be upper bounded using Chernoff's bound as explained next.

Fix any fingerprint $F$ and set $t = |F|$. 
Let $S$ be a sample of $s$ vertices that contains $F$.
Let $X$ denote the number of vertices among the other $s-t$ sampled vertices that belong to $\mathcal{C}(F)$. 
By the upper bound on the cardinality of $\mathcal{C}(F)$ from the second property of \Cref{lem:GCL-indepsetNew}, the expected value of $X$ is bounded above by
\[
\E[X] \leq \left(\rho-\frac{t\epsilon}{8\rho\ln(\frac{2\rho}{\epsilon})}\right)(s-t) <  \rho s - \frac{t\epsilon s}{8\rho\ln(\frac{2\rho}{\epsilon})}
\le \rho s - t - \frac{t\epsilon s}{16 \rho\ln(\frac{2\rho}{\epsilon})},
\]
where the last inequality uses the fact that $s = c \cdot \frac{\rho^3}{\epsilon^2} \ln^3(\frac{1}{\epsilon})$ for a large enough constant $c$ and that the problem is non-trivial only when $\epsilon < \rho^2.$

By Chernoff's bound in \Cref{lem:chernoff}, the probability that we draw at least $\rho s - t$ vertices from $\mathcal{C}(F)$ in the additional $s - t$ vertices drawn to form $S$ is
\[
\Pr[ X \ge \rho s - t] 
\le \exp\left(\frac{-(\rho s-t-\E[X])^2}{\rho s - t+\E[X]}\right)
\leq \exp\left(\frac{-\left(\frac{t \epsilon s}{16 \rho \ln(\frac{2\rho}{\epsilon})}\right)^2}{2\rho s}\right)
\leq \exp\left(-\frac{t^2\epsilon^2s}{512 \rho^3\ln^2(\frac{2\rho}{\epsilon})} \right).
\]
Therefore, by applying a union bound over all values $t \le \frac{8\rho^2 \ln(\frac{2\rho}{\epsilon})}{\epsilon}$ and all possible choices of $F \in \mathcal{F}$ of size $t$, the probability that $S$ contains an independent set of size $\rho s$ is at most
\begin{align*}
\sum_{t} \sum_{\substack{F \in \mathcal{F} \\ |F|=t}} \frac{\binom{s}{|F|}}{\binom{n}{|F|}} \cdot \Pr\big[|\mathcal{C}(F) \cap S| \geq \rho s ~\big|~ F \subseteq S\big] 
\le & \sum_t \binom{s}{t} \exp\left(-\frac{t^2\epsilon^2s}{512 \rho^3\ln^2(\frac{2\rho}{\epsilon})} \right) \\
\le & \sum_t \exp\left( t \ln s - \frac{t^2\epsilon^2s}{512 \rho^3\ln^2(\frac{2\rho}{\epsilon})} \right) < \frac13
\end{align*}
where the second inequality uses the upper bound $\binom{s}{t} \le s^t,$ and the last inequality again uses the fact that $s = c \cdot \frac{\rho^3}{\epsilon^2} \ln^3(\frac{1}{\epsilon})$ for a large enough constant $c$. Hence, the probability that $S$ contains an independent set of size at least $\rho s$ is less than $1/3.$
\end{proof}

\begin{remark}
The completeness guarantee in the proof of \Cref{thm:clique} states that the algorithm correctly accepts graphs that have a $\rho n$-independent set with probability at least $\frac12$, instead of the usual $\frac23$ bound typically used in property testing. Since the soundness guarantee is that the algorithm accepts graphs that are $\epsilon$-far from the \textsc{$\rho$-IndepSet} property with probability at most $\frac13$, these guarantees still enable us to apply standard error reduction techniques to obtain algorithms that err with probability at most $\gamma$ with a $\ln(1/\gamma)$ multiplicative factor. In particular, it shows that there is also a standard tester with completeness guarantee $\frac23$ with the same asymptotic sample complexity.

Alternatively, it is also possible to obtain the $\frac23$ completeness guarantee directly by changing the algorithm slightly to check if the induced subgraph has the \textsc{$(\rho - \tau)$-IndepSet} property for some appropriate gap parameter $\tau := \tau(\rho,\epsilon)$. 
\end{remark}

\section{Testing k-Colorability}
\label{sect:colorable}
In this section we prove \autoref{thm:k-col}. 
We start by proving the Graph Container Lemma for $k$-colorability in \Cref{sect:col-GCL} and complete the proof of the theorem in \Cref{sect:col-proof}.

\subsection{Proof of Graph Container Lemma II}
\label{sect:col-GCL}
In this section, we prove \Cref{lem:GCL-kcolNew}, the main lemma that gives a collection of fingerprints and (small) associated containers for $k$-colorable subgraphs of graphs that are $\epsilon$-far from $k$-colorable.
As in the last section, we first prove an alternative formulation of the lemma, \Cref{lem:GCL-kcol}, and then show how it implies \Cref{lem:GCL-kcolNew}.

\begin{lemma}
\label{lem:GCL-kcol}
Let $G$ be $\epsilon$-far from $k$-colorable, and let $I_1,\dots,I_k$ be independent sets in $G.$ Then there exists $t \le \frac4\epsilon$ such that
\begin{equation}
\label{eq:GCL-kcol}
\left|\bigcup_{i=1}^k C_t(I_i)\right| \leq \left( 1 - t \cdot \frac{\epsilon}{4\ln (\frac 1\epsilon)} \right)n. 
\end{equation}
\end{lemma}

\begin{proof}
Assume on the contrary that for all $t \le 4/\epsilon$, 
\begin{equation}
\label{eqn:containers-union}
\left|\bigcup_{i=1}^k C_t(I_i)\right| > \left( 1 - t \cdot \frac{\epsilon}{4\ln (1/\epsilon)} \right)n.
\end{equation}
We will obtain a contradiction by constructing a $k$-partition $V_1,\dots,V_k$ of the vertices which shows that $G$ is not $\epsilon$-far from $k$-colorable.
    
Initialize the sets $V_1=C_{4/\epsilon}(I_1),\dots,V_k=C_{4/\epsilon}(I_k).$ Remove vertices from $V_1,\dots,V_k$ until every vertex appears in one set so that $V_1,\dots,V_k$ forms a partition of $\bigcup_{i=1}^k C_{4/\epsilon}(I_i).$ By \Cref{prop:container-degree}, each vertex has degree at most $\epsilon n/4$ in its set, and so the sum of the number of edges contained in the sets $V_1, \dots, V_k$ is at most $\frac{\epsilon n^2}{4}.$

    In the remainder of the proof, we allocate each of the remaining vertices in $V \setminus \bigcup_{i \in [k]}V_i= V \setminus \bigcup_{i \in [k]} C_{4/\epsilon}(I_i)$ to one of the parts and argue that the number of additional edges within the parts is small.
    
    For every $t = 1, \dots, \frac{4}{\epsilon},$ let $A_t$ be the set of vertices that are contained in at least one container at the $(t-1)$-th step, but not contained in any container at step $t$ step of the container procedure. In other words, let \[A_t = \left\{v \in V : v \in \bigcup_{i \in [k]} C_{t-1}(I_i) \textrm{ and } v \notin \bigcup_{i \in [k]} C_{t}(I_i)\right\}.\]
    
    Observe that $A_1,\dots,A_{4/\epsilon}$ partitions the vertices in $V \setminus \bigcup_{i \in [k]}V_i,$ and so to complete the partition of $G$ we add the vertices of $A_1, \dots, A_{4/\epsilon}$ to $V_1,\dots,V_k$ in the following way: for each $t=1,\dots, \frac{4}{\epsilon}$ and $v \in A_t,$ select an $i$ such that $v \in C_{t-1}(I_i),$ and add $v$ to set $V_i.$

    In order to count the new edges, consider adding the vertices from $A_1,\dots,A_{4/\epsilon}$ in reverse order (starting at vertices in $A_{4/\epsilon}$ and going to $A_1$). In this way, when $v \in A_t$ is added to $V_i,$ it holds that $V_i \subseteq C_{t-1}(I_i).$ Hence, by \Cref{prop:container-degree}, each $v \in A_t$ contributes at most $\frac{n}{t-1}$ new edges to $V_i$ (or $n$ edges if $t=1$) because $v$ is contained in $C_{t-1}(I_i).$
    
    Therefore, the sum of the edges in each of the sets $V_1, \dots, V_k$ can be upper bounded by
    \begin{equation}
        \label{eqn:sum-edges}
        \frac{\epsilon n^2}{4}+|A_1|\cdot n+\sum_{t = 2}^{4/\epsilon} |A_t| \frac{n}{(t-1)}.
    \end{equation}
    
    For any $t,$ $\bigcup_{\ell = 1}^t A_\ell = V \setminus \bigcup_{i \in [k]} C_t(I_i),$ and so \eqref{eqn:containers-union} implies that $\sum_{\ell = 1}^{t} |A_\ell| \leq \frac{t \epsilon n}{4\ln(1/\epsilon)}.$ In the sum \eqref{eqn:sum-edges}, the contribution from $|A_t|$ goes down as $t$ increases, and so the sum is maximized when $|A_t|=\frac{\epsilon n}{4\ln(1/\epsilon)}$ for all $t.$ Hence, the sum of the edges in each of the sets $V_1, \dots, V_k$ can be upper bounded by
    \[\frac{\epsilon n^2}{4}+\frac{\epsilon n^2}{4 \ln(1/\epsilon)} + \frac{\epsilon n^2}{4 \ln(1/\epsilon)}\sum_{t=2}^{4/\epsilon} \frac{1}{t-1} \leq \frac{\epsilon n^2}{4}+ \frac{\epsilon n^2}{4 \ln(1/\epsilon)}\left(\ln(4/\epsilon)+2\right) < \epsilon n^2,\]
    where the first inequality uses the upper bound $H(m) \le \ln(m) + 1$ on the harmonic series and the second inequality uses the bound $\epsilon < e^{-2}$ that we can apply without loss of generality. This is a contradiction with the fact that $G$ is $\epsilon$-far from $k$-colorable.
\end{proof}

We are now ready to complete the proof of \Cref{lem:GCL-kcolNew}, restated below.

\GCLkcolNew*
\begin{proof}
    For any $U \subseteq V$ such that $G[U]$ is $k$-colorable, let $\mathcal{I}(U)=(I_1,\dots,I_k)$ be an arbitrary partitioning of $U$ into $k$ independent sets (since $G[U]$ is $k$-colorable at least one such partitioning exists) and let $t_{\mathcal{I}(U)}$ denote the index given in \Cref{lem:GCL-kcol} with the desired container bounds.
    Let \[\mathcal{F}=\{(F_{t_{\mathcal{I}(U)}}(I_1),\dots,F_{t_{\mathcal{I}(U)}}(I_k)) : U\subseteq V \textrm{ is $k$-colorable and } |U| \geq 4k/\epsilon \},\] and let the container function $\mathcal{C}: \mathcal{F} \rightarrow 2^V$ be the function that takes a sequence of fingerprints $F = (T_1,\dots,T_k) \in \mathcal{F}$ and returns $\bigcup_{i \in [k]} C_{\max_{i \in [k]}|T_i|}(T_i),$ where $C_t(\cdot)$ and $F_t(\cdot)$ are the functions defined in \Cref{alg:FingerprintContainer}.
    First observe that for any $U \subseteq V$ such that $G[U]$ is $k$-colorable, $\max_{i \in [k]} |F_{t_{\mathcal{I}(U)}}(I_i)| \leq t_{\mathcal{I}(U)}.$
    Hence, by the basic properties of containers and \Cref{prop:container-closure}, we have that
    \[\bigcup_{i \in [k]} F_{t_{\mathcal{I}(U)}}(I_i) \subseteq U \subseteq \bigcup_{i \in [k]} C_{t_{\mathcal{I}(U)}}(I_i) = \bigcup_{i \in [k]} C_{t_{\mathcal{I}(U)}}(F_{t_{\mathcal{I}(U)}}(I_i)) \subseteq \mathcal{C}(F_{t_{\mathcal{I}(U)}}(I_1),\dots,F_{t_{\mathcal{I}(U)}}(I_k)),\]
    which proves the first conclusion of the lemma.

    We now show the desired upper bounds on the size of every $|T_i|$ for each $F=(T_1,\dots,T_k) \in \mathcal{F}$ and the size of the associated container $\mathcal{C}(F).$
    For any $U \subseteq V$ such that $G[U]$ is $k$-colorable and $|U| \geq 4k/\epsilon,$ we have that, for every $i \in [k],$ $|F_{t_{\mathcal{I}(U)}}(I_i)| \leq t_{\mathcal{I}(U)},$ and by the upper bound on $t_{\mathcal{I}(U)}$ from \Cref{lem:GCL-kcol} we get the desired upper bound on $|F_{t_{\mathcal{I}(U)}}(I_i)|.$
    Furthermore, observe that $\max_{i \in [k]} |F_{t_{\mathcal{I}(U)}}(I_i)|=t_{\mathcal{I}(U)}$ because $\max_{i \in [k]} |I_i| \geq |U|/k \geq 4/\epsilon,$ and so
    \[
    \mathcal{C}(F_{t_{\mathcal{I}(U)}}(I_1),\dots,F_{t_{\mathcal{I}(U)}}(I_k)) = \bigcup_{i \in [k]} C_{\max_{i \in [k]}|F_{t_{\mathcal{I}(U)}}(I_i)|}(F_{t_{\mathcal{I}(U)}}(I_i)) = \bigcup_{i \in [k]} C_{t_{\mathcal{I}(U)}}(F_{t_{\mathcal{I}(U)}}(I_i)).
    \]
    By \Cref{prop:container-closure} this is equal to $\bigcup_{i \in [k]} C_{t_{\mathcal{I}(U)}}(I_i),$ and using the upper bound from \Cref{lem:GCL-kcol} completes the upper bound on the size of $\mathcal{C}(F_{t_{\mathcal{I}(U)}}(I_1),\dots,F_{t_{\mathcal{I}(U)}}(I_k)).$
\end{proof}

\subsection{Proof of \autoref{thm:k-col}}
\label{sect:col-proof}
We can now complete the proof of \autoref{thm:k-col}, restated below.

\newtheorem*{col-thm-precise}{Theorem 2}
\begin{col-thm-precise}[Precise formulation]
The sample complexity of the \textsc{$k$-Colorable} property is 
\[
\mathcal{S}_{k\textsc{-Colorable}}(n,\epsilon) = O\left(\frac{k}{\epsilon} \ln^2\left(\frac{1}{\epsilon}\right)\right).
\]
\end{col-thm-precise}

\begin{proof}
Let $S$ be a random set of $s = c\frac{k}{\epsilon} \ln^2\left(\frac{1}{\epsilon}\right)$ vertices drawn uniformly at random from $V$ without replacement, where $c$ is a large enough constant. The tester accepts a graph $G$ if and only if $G[S]$ is $k$-colorable. When $G$ is $k$-colorable, then $G[S]$ is also $k$-colorable so the tester always accepts. 
In the remainder of the proof, we upper bound the probability that $G[S]$ is $k$-colorable when $G$ is $\epsilon$-far from $k$-colorable.

Consider the collection of sequences of fingerprints $\mathcal{F}$ and the container function $\mathcal{C} \colon \mathcal{F} \rightarrow 2^V$ from \Cref{lem:GCL-kcolNew}.
For every set $U \subseteq V$ for which $G[U]$ is $k$-colorable and $|U| \geq 4k/\epsilon$, there exists $F=(T_1,\dots,T_k) \in \mathcal{F}$ such that $\bigcup_{i \in [k]} T_i \subseteq U \subseteq \mathcal{C}(F).$
Hence, $G[S]$ is $k$-colorable only when there exists $F=(T_1,\dots,T_k) \in \mathcal{F}$ such that $S \subseteq \mathcal{C}(F)$ and $S$ contains $\bigcup_{i \in [k]}T_i.$
In other words, the probability that $G[S]$ is $k$-colorable is at most
\[\sum_{F=(T_1,\dots,T_k) \in \mathcal{F}} \Pr\left[\bigcup_{i \in [k]} T_i \subseteq S\right] \cdot \Pr\left[S \subseteq \mathcal{C}(F) ~\bigg|~ \bigcup_{i \in [k]} T_i \subseteq S\right].\]

Fix $F=(T_1,\dots,T_k) \in \mathcal{F}$ and let $t = \big| \bigcup_{i \in [k]} T_i \big|$.
The probability that $S$ contains $\bigcup_{i \in [k]}T_i$ is at most
\[\frac{\binom{n-t}{s-t}}{\binom{n}{s}} = \frac{\binom{s}{t}}{\binom{n}{t}},\]
because there are at most $\binom{n}{s}$ ways to choose $S,$ and $\binom{n-t}{s-t}$ ways to choose an $S$ that includes the vertices in $F.$
By the upper bound on $|\mathcal{C}(F)|$ in \Cref{lem:GCL-kcolNew}, the probability that $S \subseteq \mathcal{C}(F)$ conditioned on the fact that $S$ contains $\bigcup_{i \in [k]}T_i,$ can be upper bounded by $ \left( 1 - \frac{\epsilon}{4\ln (1/\epsilon)} \cdot \max_{i \in [k]} |T_i|\right)^{s-t}$.
Hence, the probability above can be upper bounded by
\[
    \sum_{t=1}^{4k/\epsilon} ~\sum_{\substack{F=(T_1,\dots,T_k) \in \mathcal{F}\\ | \bigcup_{i \in [k]} T_i | = t}} \frac{\binom{s}{t}}{\binom{n}{t}} \cdot \left( 1 - \frac{\epsilon}{4\ln (1/\epsilon)} \cdot \max_{i \in [k]} |T_i|\right)^{s-t}
    \leq \sum_{t=1}^{4k/\epsilon} k^t \binom{s}{t} \cdot \left( 1 - \frac{\epsilon}{4\ln (1/\epsilon)} \cdot \frac{t}{k}\right)^{s-t},
\]
where the inequality uses that $\max_{i \in [k]} |T_i| \geq \frac{t}{k}$ and that for any $t$ there are at most $\binom{n}{t} \cdot k^t$ ways to choose a sequence of fingerprints $F = (T_1,\dots,T_k)$ with $\big| \bigcup_{i \in [k]} T_i \big| = t.$
Finally, using that $s > 2t$ and that $\binom{s}{t} \leq s^t,$ this can be upper bounded by
\[\sum_{t=1}^{4k/\epsilon} k^t s^t \cdot \left( 1 - \frac{\epsilon}{4\ln (1/\epsilon)} \cdot \frac{t}{k}\right)^{s/2} \le \sum_{t=1}^{4k/\epsilon} \exp\left(t \ln(s k) - \frac{\epsilon t s}{8k\ln (1/\epsilon)}\right) < 1/3,\]
where the inequality uses the facts that $s = c\frac{k}{\epsilon} \ln^2\left(\frac{1}{\epsilon}\right)$ for a large enough constant $c$ and that the problem is non-trivial only when $\epsilon < 1/k$ and $\ln(\frac k\epsilon) \le \ln\frac1{\epsilon^2} = 2\ln \frac 1\epsilon$ in this regime.
\end{proof}

\section*{Acknowledgements}
Thank you to the anonymous reviewers for helpful comments on the paper.

\bibliographystyle{alpha}
\bibliography{references}

\end{document}